\documentclass[11pt]{article}
\usepackage{amsmath, amssymb, graphicx, enumitem}
\usepackage{amsthm}
\usepackage[margin=1in]{geometry}
\usepackage{url}
\usepackage{graphicx}
\usepackage{amsfonts}
\usepackage{amssymb}
\usepackage{amsmath}
\usepackage{latexsym}
\usepackage{hyperref}
\usepackage{xcolor}
\usepackage{cleveref}
\usepackage{multirow}
\usepackage{bm}
\usepackage{float}
\usepackage[normalem]{ulem}
\usepackage{dblfloatfix}
\usepackage{cite}
\usepackage{caption}

\usepackage[ruled,vlined]{algorithm2e}
\SetKwInOut{Input}{Input}
\SetKwInOut{Output}{Output}
\SetKw{Continue}{continue}
\SetCommentSty{mycommentfont}

\newcommand{\ABox}{
\raisebox{3pt}{\framebox[6pt]{\rule{6pt}{0pt}}}
}
\newtheorem{theorem}{Theorem}
\newtheorem{proposition}{Proposition}
\newtheorem{remark}{Remark}
\newtheorem{definition}{Definition}

\crefname{theorem}{Theorem}{Theorems}
\crefname{lemma}{Lemma}{Lemmas}
\crefname{proposition}{Proposition}{Propositions}
\crefname{definition}{Definition}{Definitions}
\crefname{corollary}{Corollary}{Corollaries}
\crefname{figure}{Figure}{Figures}
\crefname{table}{Table}{Tables}
\crefname{property}{Property}{Properties}
\crefname{remark}{Remark}{Remarks}
\crefname{section}{Section}{Sections}
\crefname{appendix}{Appendix}{Appendices}

\newtheorem{lem}{Lemma}
\crefname{lem}{Lemma}{Lemmas}

\newcommand{\squeezelist}{\setlength{\itemsep}{0pt}}

\title{Pinch-Free Orientations of Polyomino Grid Graphs}

\author{James Fox\thanks{Department of Computer Science, Villanova University, Villanova, PA, USA, \texttt{jfox26@villanova.edu}}
	\and
Mirela Damian\thanks{Department of Computer Science, Villanova University, Villanova, PA, USA, \texttt{mirela.damian@villanova.edu}}}

\date{}

\begin{document}
\maketitle

\begin{abstract}
We study lattice-based collective construction of a target polyomino on the integer grid, where robots place unit square blocks subject to two physical constraints: the partial structure must remain connected at all times, and no placement may create an unfillable one-cell gap. We present a wavefront sweep algorithm that computes a dependency structure satisfying the local conditions required for safe assembly. Given a list of $n$ cells sorted lexicographically by coordinates, the orientation can be computed in $O(n)$ time using $O(n)$ auxiliary space.
\end{abstract}

\section{Introduction}
Collective robotic construction builds large structures by having many relatively simple,
autonomous robots transport and place passive building blocks, coordinating through local
sensing, limited communication, and decentralized decision-making. By exploiting
parallelism and avoiding a single point of control, such systems are scalable, robust to
individual failures, and well suited to uncertain, changing environments~\cite{Petersen2019Review}.

In this paper we study \emph{lattice-based collective construction}, where robots assemble
structures on a two-dimensional grid using unit square blocks. This abstraction captures
key geometric and coordination constraints of block-based building while enabling clean
algorithmic formulations and provable guarantees. A central difficulty is ensuring that
\emph{every intermediate configuration} remains physically buildable. We focus on two
fundamental physical constraints:
(i) the partially built structure must remain connected at all times; and
(ii) a robot cannot place a block into a one-cell gap between two already placed blocks,
because such a gap is too narrow to maneuver a block into position.
A poor attachment sequence can therefore create narrow, inaccessible sites that halt progress.

These constraints motivate construction plans that can be enforced using only local
information while still guaranteeing a feasible build sequence. We model the target as a
polyomino $P$ and represent it by its dual graph $G$, with one vertex per occupied cell
and an edge between adjacent cells. Our goal is to compute a dependency structure on $G$
that encodes safe attachment conditions for decentralized construction while preserving
connectivity and avoiding one-cell gaps throughout the build.

Fitzsimmons and Flatland~\cite{FitzsimmonsFlatland2011} showed that these constraints can
be captured by an orientation that is rooted, acyclic, and \emph{pinch-free}, meaning that
no vertex has collinear incoming edges. We call
such an orientation a \emph{rooted pinch-free orientation}. 
In the unrestricted-access model, where robots can access any empty grid cell regardless of 
%regardless of whether it is reachable from the unbounded free region 
reachability from the unbounded region, any topological order of such an orientation yields a valid local construction plan.

\paragraph{Our contribution.}
We give a sweep-based wavefront algorithm that computes a rooted pinch-free orientation
for any connected polyomino, including polyominoes with holes. If the input is given as a
list of $n$ cells sorted lexicographically by coordinates, the algorithm runs in $O(n)$ time
and uses $O(n)$ auxiliary space. For an unsorted list, the preprocessing time is
$O(n\log n)$ by comparison sorting, or $O(n)$ by radix sort when the integer coordinates are in a bounded range. This improves the previously implicit $O(n^2)$ bound
from~\cite{FitzsimmonsFlatland2011}.

%\medskip
\noindent
\begin{minipage}{0.6\textwidth}
This output can be viewed as a constrained graph orientation, but the requirement that no two incoming edges be collinear in the underlying grid embedding is not captured by standard graph-orientation primitives such as $st$-orientations~\cite{EvenTarjan1976}, nor is it enforced by BFS-type traversals: in~\cref{fig:bfs}, a BFS-induced orientation from $s$ creates a pinch at $t$. 
\end{minipage}\hfill
\begin{minipage}{0.3\textwidth}
% \vspace{-0.2em}
\centering
\includegraphics[page=4,width=0.45\linewidth]{graph.pdf}
\captionsetup{
        font=scriptsize,
        justification=raggedright,
        singlelinecheck=false
}
 \captionof{figure}{A BFS-induced pinch.}
\label{fig:bfs}
\end{minipage}

%\red{This output can be viewed as a constrained graph orientation, but its nonstandard feature is grid-geometric: besides being acyclic and rooted, the orientation must forbid two incoming edges that are collinear in the underlying grid embedding. This condition is not captured by standard graph-orientation primitives such as $st$-orientations~\cite{EvenTarjan1976}. Similarly, classical traversals such as BFS do not necessarily induce a pinch-free orientation; in the example of~\cref{fig:poly}(c), the BFS order leaves a cross edge connecting the two BFS branches, and either orientation of this edge creates a pinch.} 

\vspace{-1em}
\subsection{Related work}
Collective robotic construction spans many systems and coordination strategies; see
Petersen et al.~\cite{Petersen2019Review} for a survey. Many such systems achieve
scalability and robustness by avoiding global state and coordinating through locally
observable cues in the shared environment, a stigmergic strategy inspired by ants and
termites. Termite-inspired construction is exemplified by the TERMES line of
work~\cite{Petersen2011Termes,Werfel2014Science}.

Decentralized robotic construction is closely related to lattice-based
self-reconfiguring modular robotics, where identical units transform one shape into
another through local module motions 
such as pivoting~\cite{MTRAN13, Rus15,Musketeers21,pivoting-socg21,Halil25}, sliding~\cite{Dumitrescu04,Dumitrescu06,VJ20,Hugo22,ZA+24,Irina24,Brunner2024,Hugo25input},
or expansion~\cite{GS+08,Crystalline09}. 
By contrast, construction models build a 
target shape incrementally, adding modules to a growing structure. 

Closest to our work are grid-based construction models involving cubic modules, in which the main geometric constraints are sufficient clearance to move through free space, and access to each placement site from the unbounded exterior. 
Werfel and Nagpal~\cite{Werfel2006} give a perimeter-following strategy for connected 
2D targets without holes and with sufficiently wide corridors, and Werfel et al.~\cite{Werfel2008} 
extend this framework to a restricted class of 3D targets using local attachment rules 
encoded in the blocks. 

The MIT--NASA Space Robots Team~\cite{Brunner2024} studies construction and reconfiguration in a 3D motion model for cubic modules, where moving modules must travel through free space with sufficient clearance. Their algorithms deconstruct extremal slices within a broader 3D motion-planning framework, preserving exterior accessibility and the required clearance properties under their assumptions. Our work isolates the intrinsic 2D dependency structure of a target slice: for any connected 2D target, we compute in linear time a rooted pinch-free orientation, equivalently a partial order for construction or deconstruction, whose prefixes preserve connectivity within the 2D target. 

Fitzsimmons and Flatland~\cite{FitzsimmonsFlatland2011} study the corresponding 2D
construction problem and give an $O(n^2)$ algorithm based on reverse peeling of non-cut
vertices. Their construction sequence induces an acyclic dependency structure,
equivalently a rooted pinch-free orientation. 
The quadratic bound comes from a linear-time search for the next valid peel at each step; speeding up cut-vertex maintenance alone does not directly make this geometrically constrained peeling procedure linear. We avoid this dynamic search by making the induced structure itself the object of computation and computing it directly in linear time by a wavefront sweep. 

A different line of work prescribes a single global traversal: Damian and
Flatland~\cite{DF15} use squeeze-free Hamiltonian paths to avoid inaccessible one-cell
gaps, trading flexibility for minimal coordination.

\section{Problem Formulation}
\label{sec:problem}

A \emph{polyomino} is a finite connected union of unit grid cells joined edge to edge.
Let $G$ be its dual graph: each occupied cell is a vertex, and adjacent cells are joined
by an edge; see~\cref{fig:poly}~for an example. Thus $G$ is a finite connected subgraph of the integer grid, with maximum degree~4.

%A \emph{polyomino} is a finite connected union of unit grid cells joined edge to edge. Given a polyomino $P$, let $G$ denote its dual graph, obtained by placing one vertex in each occupied cell and joining two vertices by an edge whenever the corresponding cells share a side; see~\cref{fig:poly}~for an example. Thus $G$ is a finite, simple, connected subgraph of the integer grid, and every vertex has degree at most~4, since each cell has at most four cardinal neighbors.

%\begin{definition}[Pinch at a vertex]
%\label{def:pinch}
%Let $\overrightarrow{G}$ be an orientation of $G$. A vertex $v$ of $G$ is
%\emph{pinched} in $\overrightarrow{G}$ if it has two collinear incoming edges;
%otherwise $v$ is \emph{pinch-free}.
%\end{definition}

Let $\overrightarrow{G}$ be an orientation of $G$. A vertex $v$ of $G$ is
\emph{pinched} in $\overrightarrow{G}$ if it has two collinear incoming edges. 

\begin{definition}[Rooted pinch-free orientation]
\label{def:roo}
An orientation $\overrightarrow{G}$ of $G$ is called \emph{rooted pinch-free} if it
satisfies the following properties: 
\begin{description}
\squeezelist
\item[(P1) Pinch-free:] no vertex is pinched;
\item[(P2) Acyclic:]  $\overrightarrow{G}$ is acyclic; 
\item[(P3) Rooted:] some root vertex $r$ reaches every other vertex of $G$ by a directed path in $\overrightarrow{G}$.
\end{description}
\end{definition}

\noindent
Fitzsimmons and Flatland~\cite{FitzsimmonsFlatland2011} showed that attaching blocks in any topological order of a rooted pinch-free orientation keeps the partial
construction connected and pinch-free at every step. Our main result is the following. 

\begin{theorem}[Main result]
\label{thm:main}
Let $P$ be a connected polyomino with $n$ cells, given as a list of cell coordinates
sorted lexicographically by $(x,y)$, and let $G$ be its dual graph. Then a rooted
pinch-free orientation of $G$ can be computed in $O(n)$ time using $O(n)$ auxiliary space. 
\end{theorem}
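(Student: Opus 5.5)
The plan is to reduce \cref{thm:main} to finding a \emph{safe insertion order}, and then to compute that order by a column-by-column wavefront sweep whose total work is linear.

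\textbf{Reduction.} Fix any order $v_1=r, v_2,\dots,v_n$ of the cells and orient every edge $v_iv_j$ with $i<j$ from $v_i$ to $v_j$. This orientation is acyclic by construction, which gives (P2). Suppose each $v_j$ with $j>1$ has at least one neighbour among $v_1,\dots,v_{j-1}$. Then every vertex has an in-neighbour earlier in the order, and following in-neighbours back gives a directed path from $r$, which is (P3). Pinch-freeness (P1) holds if and only if, when $v_j$ is inserted, no two \emph{opposite} neighbours of $v_j$ (left and right, or up and down) are already present. So the task is to compute, in $O(n)$ time, an order that is connected at every prefix and never fills a one-cell gap between two present cells. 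I take $r$ to be the lexicographically first cell, and I bucket the sorted input into columns and then into maximal vertical runs in one linear pass. Two runs in adjacent columns are adjacent if they share a horizontal edge. Because runs in consecutive columns are both sorted by $y$, all run adjacencies, and all cell-level horizontal neighbour pointers, can be found by a merge-like scan in $O(n)$ time and space.

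\textbf{Handling runs.} Each run $R$ is inserted as a block when it is first touched by an already-present neighbour $u$. I pick the cell $s\in R$ adjacent to $u$ as the seed and grow $R$ upward and downward from $s$. Every non-seed cell of $R$ then receives exactly one vertical in-edge, from the seed side, so no vertical pinch can ever arise. A horizontal pinch can arise only if some cell of $R$ has both its left and its right neighbour present before $R$ is inserted. If the structure were column-monotone, meaning every run is first reached from a run to its left, this could never happen, because right-hand neighbours are inserted strictly later. So the plain left-to-right sweep already handles every run that is reached from the left.

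\textbf{Main obstacle: backward pockets.} The difficulty is runs that are connected to the already-built structure only through later columns. Examples are a U-shape opening to the left, or the far side of a hole. These force backward propagation, and a naive BFS or DFS over runs can then complete both horizontal neighbours of some cell before that cell is inserted, exactly as in \cref{fig:bfs}.

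My first attempt will be to treat each backward pocket as a subtree of the run-adjacency graph and to schedule it eagerly. As soon as the sweep reaches a run whose left side touches unbuilt runs, I recursively insert those runs right-to-left, before the sweep builds anything further left of them. Inside the pocket this is again a monotone sweep, now in the mirrored direction. For each cell $c$ I would then prove the invariant that at most one of its two horizontal neighbours is present when $c$ is inserted. The proof would use the direction in which $c$'s run was first entered, together with the planarity of the run graph: a cell cannot be squeezed from both sides unless the two sides were reached along two different routes around a hole. Where a run is nonetheless entered from both sides, the remedy is to split it at the offending cell into two sub-runs, each seeded from its own side. The step I expect to be hardest, and the one I would verify most carefully, is that this splitting never creates a vertical pinch and is needed at most $O(1)$ times per cell.

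Once the invariant is established, every run and every adjacency is touched $O(1)$ times. This gives the claimed $O(n)$ time and $O(n)$ auxiliary space.
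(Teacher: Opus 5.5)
Your reduction to a connected, gap-free insertion order is sound, and your merge-scan preprocessing is exactly the paper's. The gap is in the core step. You never fix a schedule under which the invariant ``at most one horizontal neighbour of $c$ is present when $c$ is inserted'' holds, and the eager schedule you sketch fails.

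Take the 16 cells
\[
(0,0),(1,0),(2,0),\quad R=\{(3,0),(3,1),(3,2)\},\quad U=\{(2,2),(2,3),(2,4)\},\quad (3,4),(4,4),
\]
\[
Y=\{(5,1),\dots,(5,4)\},\quad Z=\{(4,1)\}.
\]
These form a ring around the hole $\{(3,3),(4,2),(4,3)\}$ plus a tail.
\begin{itemize}
\item From $r=(0,0)$ the sweep builds $(1,0)$, $(2,0)$, then $R$.
\item The left side of $R$ touches the unbuilt $U$, so your rule builds $U$ at once, before $Z$.
\item Inside this leftward pocket, the right side of $U$ touches the unbuilt run $\{(3,4)\}$. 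The mirrored recursion therefore builds $(3,4)$, $(4,4)$ and then $Y$.
\item Now $(4,1)$ has $(3,1)\in R$ and $(5,1)\in Y$ both present, so it is pinched however $Z$ is handled.
\end{itemize}
The pocket is not a subtree of the run graph. It wraps around the hole and reaches the forward side of $R$ before the main sweep does.

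Your splitting remedy cannot repair this. A horizontal pinch at $c$ depends only on whether both horizontal neighbours of $c$ precede $c$. Splitting or re-seeding $c$'s own run changes only vertical edges, and here $Z$ is a single cell anyway. Note also that ``insert a run when first touched'' and ``handle pockets as soon as they appear'' already conflict at $R$, which touches both $U$ and $Z$.

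The missing idea is the paper's deferral with alternating directions, together with decoupling the orientation from the discovery order.
\begin{itemize}
\item A wave in direction $d$ expands only forward. Every unvisited inward neighbour is postponed to the frontier $\Gamma_{-d}$ of the next wave, so no pocket starts until the current forward closure is complete. In the example, $Z$ and $Y$ fall in wave~1, and $U$, $(4,4)$, $(3,4)$ in wave~2.
\item If you keep orientation equal to processing order, you would also have to process each wave in monotone column order. Processing wave~2 as $U$, $(4,4)$, $(3,4)$ would pinch $(3,4)$.
\item The paper instead lets rule~(F) orient every forward-side horizontal edge of a component outward, even toward a neighbour already processed in the same wave.
\end{itemize}
With these rules, pinch-freeness is immediate. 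Vertical in-degree is at most one, since edges point away from the seed, and horizontal in-edges can only come from the inward side. Acyclicity comes from the potential $\Phi(v)=(w(C_v),\,d(C_v)\,x(v),\,\delta(v))$ rather than from an insertion order. Each component is processed once and each edge oriented once, so the $O(n)$ bound follows with no splitting at all.
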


\noindent
The proof is organized as follows. \cref{sec:wavefront} gives the algorithm,
\cref{sec:correctness} proves correctness,
and \cref{sec:complexity} analyzes the time and space bounds.

%\section{Constructing a Rooted Pinch-Free Orientation}
\section{Wavefront Orientation}
\label{sec:algorithm}

We describe an algorithm that constructs a rooted pinch-free orientation of the dual
graph $G$ of a polyomino $P$.

\begin{definition}[Column components]
\label{def:col-decomp}
A \emph{column component} of $G$ is a maximal vertically connected subset of the vertices in a single column; the collection of all such components is the \emph{column decomposition} 
of $G$ (see~\cref{fig:poly}).
\end{definition}

Two distinct column components $H$ and $K$ are adjacent if $G$ has a
horizontal edge with one endpoint in $H$ and the other in $K$. In~\cref{fig:poly} 
for example, $H$ and $K$ are adjacent, but $H$ and $J$ are not. 
Let $x(H)$ denote the column index of component $H$; if
$H$ and $K$ are adjacent, then $x(K)=x(H)\pm1$.

%For a column component $H$, let $x(H)$ denote its column index.
%Thus, if $H$ and $K$ are adjacent, then $x(K)=x(H)\pm 1$.

\begin{figure}[!htp]
    \centering
    \includegraphics[page=1,width=0.38\textwidth]{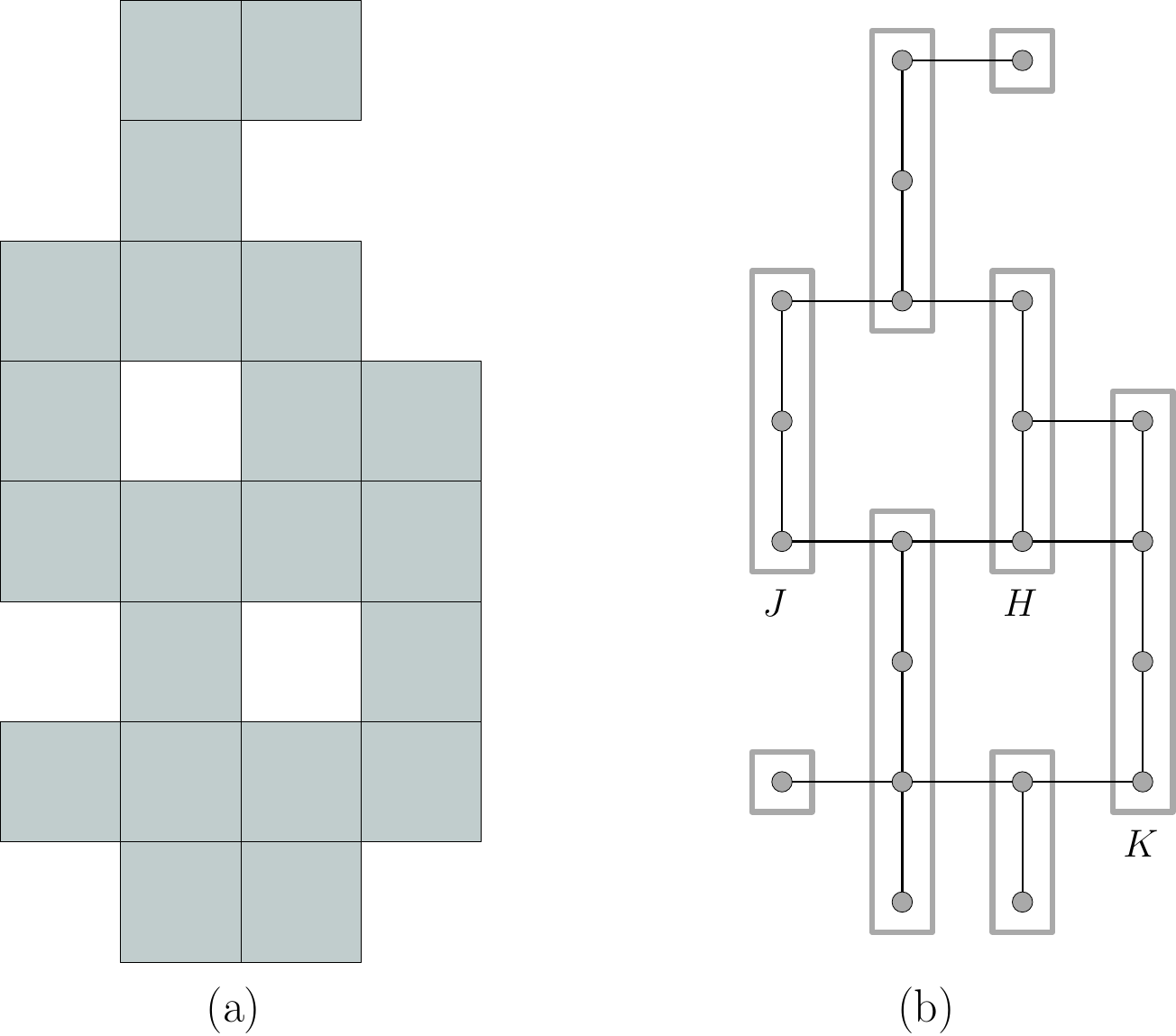}
\caption{(a) A polyomino example; (b) its dual graph and column decomposition.}    
\label{fig:poly}
\end{figure}

\begin{definition}[Forward and inward neighbors]
\label{def:forward-inward}
Fix a horizontal direction $d\in\{-1,+1\}$, where $d=+1$ means right and
$d=-1$ means left. Let $H$ and $K$ be adjacent column components.
We call $K$ a \emph{forward neighbor} of $H$ with respect to $d$ if
$x(K)=x(H)+d$, and an \emph{inward neighbor} of $H$ with respect to $d$ if
$x(K)=x(H)-d$.
The \emph{forward side} of $H$ is the
side of column $x(H)$ in direction $d$; the \emph{inward side} is the opposite side.
%The \emph{forward side} of $H$ is the open halfplane, bounded by the vertical line through column $x(H)$, that contains the forward neighbors of $H$. The \emph{inward side} of $H$ is defined analogously.
\end{definition}

\subsection{Wavefront Sweep Algorithm}\label{sec:wavefront}

We now describe \emph{Wavefront}, our algorithm for orienting $G$. It performs a sequence of horizontal sweeps, or \emph{waves}, alternating between left-to-right and right-to-left directions. Each wave has a fixed direction $d\in{-1,+1}$ and grows the visited set of column components from a current \emph{frontier}. The first frontier consists of a single chosen root component; each later frontier consists of the unvisited components exposed on the inward side of the previous wave.

\vspace{-0.8em}
\paragraph{Preprocessing and initialization.}
We first compute the column components and their horizontal adjacencies.
Choose an arbitrary root component $H_{\mathrm{root}}$. For each component $H$, mark $H$
unvisited and initialize its seed $s(H)\gets\bot$, except for the root component, whose
seed is set to an arbitrary vertex of $H_{\mathrm{root}}$. We also maintain two global
frontiers $\Gamma_{-1}$ and $\Gamma_{+1}$, initially empty except for
$\Gamma_{+1}=\{H_{\mathrm{root}}\}$. Thus the first wave runs with direction $d=+1$.

\vspace{-0.5em}
\paragraph{A single wave.}
A call \textsc{Wave}$(d)$ launches a wave in direction $d$ from the global frontier
$\Gamma_d$. It first transfers $\Gamma_d$ to a local worklist $W$ and clears
$\Gamma_d$. It then repeatedly removes a component $H$ from $W$; visited components
are skipped, and each unvisited component is processed as follows:

%\begin{enumerate}[leftmargin=*, itemsep=2pt, topsep=2pt]
\begin{enumerate}[label=\textbf{\arabic*.}, leftmargin=*, itemsep=1pt, topsep=2pt]
\item \textbf{Choose seed.}
If $s(H)=\bot$, choose $s(H)$ adjacent to a visited component
in column $x(H)-d$.

\item \textbf{Orient vertical edges.}
Orient all vertical edges of $H$ away from $s(H)$.

\item \textbf{Expand forward.}
For each forward neighbor $K$ of $H$, orient all horizontal edges between $H$ and $K$
from $H$ into $K$. If $K$ is unvisited and not already in $W$, insert it into $W$.

\item \textbf{Build the next frontier.}
For each inward neighbor $K$ of $H$, if $K$ is visited and the horizontal edges between
$K$ and $H$ are still unoriented, orient them from $K$ into $H$. If $K$ is unvisited and
not already in $\Gamma_{-d}$, insert it into $\Gamma_{-d}$.

\item \textbf{Finalize.}
Mark $H$ as visited.
\end{enumerate}

\noindent
The wave ends when $W$ is empty. At that point, $\Gamma_{-d}$ contains the starting
frontier for the next wave. The processing order inside $W$ is arbitrary; correctness
does not depend on the order in which components are removed from the local worklist $W$.

%\begin{figure}[!htp]
\begin{figure*}[t]
    \centering
    \includegraphics[page=2,width=0.7\textwidth]{graph.pdf}
    \caption{Wavefront orientation algorithm. In each panel, column components processed in the current wave are shown in blue;
previously processed components are shown in gray; seeds are shown in orange. The orientation $\overrightarrow{G}$
after the (a) first wave ($d=+1$), (b) second wave ($d=-1$), and (c) third wave ($d=+1$).}
      \label{fig:alg}
\end{figure*}

Figure~\ref{fig:alg} illustrates this process. In each wave, vertical edges are
oriented away from the seed of each processed component, while newly oriented horizontal
edges follow the wave direction. Inward neighbors encountered during one wave form the
frontier for the next. A component may also remain as a stale entry in the opposite
frontier after becoming visited; if so, it is simply skipped when later removed. 
For example, in~\cref{fig:alg}(a), if $H$ is processed before $K$ during a wave with direction $+1$, then $K$ is added to
$\Gamma_{-1}$ as an inward neighbor of $H$. Later in the same wave, $K$ is visited.
Thus, when $K$ is removed from $\Gamma_{-1}$ in the next wave, the algorithm detects
that it is already visited and skips it. Hence each component is processed at most once.

\begin{algorithm*}[htpb]
\DontPrintSemicolon
\caption{Wavefront Orientation}
\label{alg:wavefront}
\Input{The dual graph $G$ of a connected polyomino.}
\Output{A rooted pinch-free orientation $\overrightarrow{G}$ of $G$.}

Compute the column components of $G$ and their horizontal adjacencies.\;
For each column component $H$, mark $H$ unvisited and set $s(H)\gets\bot$.\tcp*{no seed yet}
Initialize the global frontier worklists $\Gamma_{-1}$ and $\Gamma_{+1}$ to empty.\;

Set $H_{\mathrm{root}}\gets$ an arbitrary column component.\;
Set $s(H_{\mathrm{root}})\gets$ an arbitrary vertex of $H_{\mathrm{root}}$.\tcp*{root seed}
Set $d\gets +1$ and $\Gamma_d\gets\{H_{\mathrm{root}}\}$.\tcp*{first wave: left-to-right}

\While{$\Gamma_d\neq\emptyset$}{
  \textsc{Wave}$(d)$.\tcp*{consumes $\Gamma_d$, fills $\Gamma_{-d}$}
  $d\gets -d$.\tcp*{alternate direction}
}

\vspace{0.6em}
\hrule
\vspace{0.6em}

\textsc{Wave}$(d)$\;
\Input{A direction $d\in\{-1,+1\}$; $\Gamma_d$ is the starting frontier for this wave.}
\Output{$\Gamma_d$ is exhausted, and $\Gamma_{-d}$ is populated for the next wave.}

Let $W\gets \Gamma_d$ and set $\Gamma_d\gets\emptyset$.\tcp*{move global frontier into local worklist}

\While{$W\neq\emptyset$}{
  Remove some component $H$ from $W$.\;
  \If{$H$ is visited}{
    continue\;
  }
  \If{$s(H)=\bot$}{
    Set $s(H)$ to any vertex of $H$ with a visited neighbor in column $x(H)-d$.\;
  }

  Orient all vertical edges of $H$ away from $s(H)$.\;

\vspace{1em}
  \tcc{Forward side: expand this wave in direction $d$.}
  \ForEach{forward neighbor component $K$ of $H$}{
    Orient all horizontal edges between $H$ and $K$ from $H$ into $K$.\;
    \If{$K$ is unvisited and $K\notin W$}{
      Insert $K$ into $W$.\;
    }
  }

\vspace{1em}
  \tcc{Inward side: build the frontier for the next wave in direction $-d$.}
  \ForEach{inward neighbor component $K$ of $H$}{
    \If{$K$ is visited and no horizontal edge between $H$ and $K$ is oriented yet}{
      Orient all horizontal edges between $K$ and $H$ from $K$ into $H$.\;
    }
    \If{$K$ is unvisited and $K\notin \Gamma_{-d}$}{
      Insert $K$ into $\Gamma_{-d}$.\;
    }
  }
  Mark $H$ as visited.\;
}
\hrule
\vspace{0.3em}
\centerline{\small{An interactive visual demo is available at 
% \url{https://papaya-semifreddo-2ee7ad.netlify.app/}.}}
%\url{https://anoncode10.github.io/robotic-assembly-demo/}.}}
\url{https://enabledornot.github.io/robotic-grid-assembly-demo/}.}}
\end{algorithm*}

\vspace{-0.5em}
\paragraph{Multiple waves and global control.}
Algorithm~\ref{alg:wavefront} repeatedly calls \textsc{Wave}$(d)$, starting with
$d=+1$ and then alternating $d\gets -d$. Each call consumes $\Gamma_d$ and may
populate only $\Gamma_{-d}$. The algorithm continues while the frontier for the current direction is nonempty, and terminates when both $\Gamma_{+1}$ and
$\Gamma_{-1}$ are empty.
We next prove that Wavefront  produces a rooted pinch-free orientation. 

%Algorithm~\ref{alg:wavefront} separates the global control from the local sweep: the top-level controller initializes the data and alternates the direction, while \textsc{Wave}$(d)$ executes one sweep.

%The correctness proof does not depend on the order in which components are removed from the local worklist $W$. 
%for \emph{any} such processing order.

\subsection{Correctness of the Wavefront Orientation}
\label{sec:correctness}

Let $\overrightarrow{G}$ be the orientation produced by Algorithm~\ref{alg:wavefront}.
We use the following notation. For each column component $H$, let $w(H)$ be the
index of the wave in which $H$ is processed, and let $d(H)\in\{-1,+1\}$ be the
direction of that wave. For each vertex $v$ of $G$, let $C_v$ denote the column
component containing $v$.

Whenever a component $H$ is processed in direction $d$, the algorithm applies
three rules: 
\begin{itemize}
\squeezelist
\item[(V)] all vertical edges in $H$ are oriented away from its seed $s(H)$;
\item[(F)] for each forward neighbor component $K$ of $H$, all horizontal edges
between $H$ and $K$ are oriented from $H$ into $K$;
\item[(I)] for each inward neighbor component $K$ of $H$, if $K$ is already visited
and no horizontal edge between $H$ and $K$ has yet been oriented, all such edges are
oriented from $K$ into $H$.
\end{itemize}

%Whenever a component $H$ is processed in direction $d$, the algorithm applies
%three rules: \textup{(V)} all vertical edges of $H$ are oriented away from its
%seed $s(H)$; \textup{(F)} all horizontal edges from $H$ to a forward neighbor are
%oriented from $H$ into that neighbor; and \textup{(I)} any still-unoriented
%horizontal edges from a visited inward neighbor are oriented into $H$. In
%particular, all horizontal edges oriented during a call to \textsc{Wave}$(d)$
%point consistently in direction $d$.

\newpage
\begin{remark}[Order of processing]
The worklists in Algorithm~\ref{alg:wavefront} need not enforce any particular order.
The correctness proof in this section shows that Wavefront produces a
rooted pinch-free orientation for \emph{any} processing order. 
One could impose additional geometric structure within a wave, for example by processing the components in increasing $x$-coordinate when $d=+1$ and in decreasing $x$-coordinate when $d=-1$, but no such structure is used in the correctness proof.
\end{remark}

\medskip
\noindent\textbf{Structural properties of the execution.}
We begin with two basic properties of the execution: every column component is processed
exactly once, and every edge of $G$ is oriented exactly once.

\begin{proposition}%[Processing and occurrences in $W$]
\label{prop:W-bound}
Throughout the execution, each column component is processed exactly once.
Moreover, each column component appears in a local worklist $W$ exactly once while
unvisited, and at most once more later while visited.
\end{proposition}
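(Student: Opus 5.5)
The plan has three parts: prove \emph{at most once}, then \emph{at least once}, then count the appearances in $W$. Throughout I use three facts read off Algorithm~\ref{alg:wavefront}.
\begin{itemize}
\item Processing a component ends by marking it visited, and visited components are skipped when removed from $W$.
\item Every insertion, whether into $W$ (step~3) or into $\Gamma_{-d}$ (step~4), is guarded by ``$K$ unvisited and not already in the list''.
\item The list $\Gamma_d$ consumed by a wave is emptied at its start, so $\Gamma_{-d}$ receives only entries created during that wave.
\end{itemize}

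\emph{At most once.} When $H$ is removed from $W$ unvisited, it is processed and marked visited before the next removal. Any later removal of $H$ is therefore skipped.

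\emph{Termination.} An entry placed in $\Gamma_{-d}$ during wave $i$ is moved into $W$ in wave $i+1$, if that wave runs. Inside that wave, the entry is either processed or skipped. By the guards, a component that has become visited is never inserted anywhere again. Within one wave, a component is inserted into $W$ at most once while unvisited, and into $\Gamma_{-d}$ at most once. Since the set of visited components only grows, there are finitely many insertions in total. So every wave ends, and the main loop stops after finitely many waves.

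\emph{At least once.} This is the step I expect to be the real obstacle: it needs both the connectivity of $G$ and the control flow between waves. The root is processed in wave~1. Suppose some component is never processed. The adjacency graph on column components is connected, because $G$ is connected and vertical edges stay inside a single component. Hence there are adjacent components $H$ and $K$ with $H$ processed, say in wave $i$ with direction $d$, and $K$ never processed. When $H$ is processed, $K$ is unvisited, and it is either a forward or an inward neighbor of $H$.
\begin{itemize}
\item \emph{Forward neighbor.} Step~3 ensures $K$ is in $W$, either newly inserted or already present. Wave $i$ runs until $W$ is empty, so $K$ is removed during wave $i$. At that moment $K$ is still unvisited, so it is processed, a contradiction.
\item \emph{Inward neighbor.} Step~4 ensures $K$ is in $\Gamma_{-d}$. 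Then $\Gamma_{-d}$ is nonempty when wave $i$ ends, so wave $i+1$ runs and transfers $K$ into its $W$. There $K$ is removed while still unvisited and is processed, again a contradiction.
\end{itemize}

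\emph{Appearances in $W$.} An entry of $K$ enters $W$ in only two ways: by a forward insertion, which requires $K$ unvisited and $K\notin W$, or by the transfer of $\Gamma_d$, which has no duplicates.
\begin{itemize}
\item \emph{While unvisited.} Removing an unvisited entry processes $K$, so $K$ has at most one unvisited appearance. By the previous step it has exactly one.
\item \emph{While visited.} After $K$ is processed in wave $w$, the guards prevent any new insertion of $K$. The only surviving stale entry can be one placed in $\Gamma_{-d}$ during wave $w$ before $K$ was processed. If so, it is transferred into $W$ exactly once, in wave $w+1$, and skipped there.
\end{itemize}
Entries in $\Gamma_{-d}$ from earlier waves cannot survive, since each frontier is consumed by the very next wave. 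Hence $K$ appears in some $W$ at most once more, as a visited entry.
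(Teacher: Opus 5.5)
Your proof is correct and follows the paper's route: processing happens at most once because of the visited mark, and at least once because some processed component is adjacent to a never-processed one (with a forward/inward case split); the single visited appearance in $W$ comes from a stale frontier entry. Your explicit termination step is a worthwhile addition that the paper leaves implicit when it says $K$ ``is eventually processed.'' It shows there are finitely many insertions, so every wave empties $W$ and the main loop halts.
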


\begin{proof}
A component is processed only when it is removed from $W$ and found to be unvisited;
afterward it is marked visited. Hence no component can be processed more than once.

We next show that every component is processed at least once. Suppose not, and let $K$ be
an unprocessed component adjacent to a visited component $A$; such a pair exists because the polyomino is connected. Let $d=d(A)$ be the direction of the wave in which $A$ is
processed. Then, with respect to $d$, the component $K$ is either a forward neighbor or an
inward neighbor of $A$.

If $K$ is a forward neighbor of $A$, then during the processing of $A$, rule~\textup{(F)}
inserts $K$ into the current worklist $W$, if not already present. Thus $K$ is processed later in the same wave.

If $K$ is an inward neighbor of $A$, then processing $A$ inserts $K$ into the opposite frontier $\Gamma_{-d}$, if not already present. It may also happen that $K$ is encountered as a forward neighbor of another component in the same wave, in which case $K$ is added to the current local worklist $W$. In this case, $K$ is processed in the current wave; otherwise, as an element of $\Gamma_{-d}$, it is processed in the next wave. Even if $K$ is inserted in both $K_{-d}$ and $W$, it is processed only when first removed while still unvisited; any later removal is ignored because $K$ is found visited. In all cases, $K$ is eventually processed, a contradiction. Therefore every component is processed at least once, and hence exactly once.

Since a component is processed only when it is removed from $W$ and found to be unvisited,
each component must appear in some local worklist $W$ unvisited at least once. After it is
processed, it is marked visited and can never become unvisited again. Thus each component
appears in a local worklist $W$ exactly once while unvisited.
Once a component $K$ is marked visited, it is never inserted into a frontier again.
Thus $K$ can appear later in a local worklist $W$ only as a stale frontier entry,
inserted into some $\Gamma_{-d}$ while still unvisited and transferred to $W$ at the
start of the next call to \textsc{Wave}$(-d)$. This can happen at most once.
Hence each component appears in a local worklist $W$ at most once while visited.
\end{proof}

\begin{lem}
\label{lem:oriented-once}
Every edge of $G$ is oriented exactly once in $\overrightarrow{G}$.
\end{lem}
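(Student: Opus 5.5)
We split the edges of $G$ into vertical and horizontal edges and treat each class separately, using \cref{prop:W-bound} throughout. For a vertical edge $e$ there is exactly one column component $H$ containing both endpoints, since column components are maximal vertically connected sets. Vertical edges are oriented only by rule~\textup{(V)}, and only when their own component is processed. By \cref{prop:W-bound}, $H$ is processed exactly once, so $e$ is oriented exactly once. The orientation is well defined: $H$ induces a path in $G$, and $s(H)$ is a vertex of that path, so ``away from $s(H)$'' orients each edge of the path uniquely.

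For horizontal edges, group them by the unordered pair $\{H,K\}$ of adjacent components they join, with $x(K)=x(H)+1$. All edges in such a group are always oriented together, as a block, by rule~\textup{(F)} or \textup{(I)}. So it suffices to show that each group is oriented exactly once. Both $H$ and $K$ are processed exactly once, by \cref{prop:W-bound}. Suppose without loss of generality that $H$ is processed first, say in direction $d=d(H)$; the other case is symmetric. When $H$ is processed, $K$ is still unvisited. Hence no prior processing step has touched the group, because only the processing of $H$ or $K$ can touch it.

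\emph{Case $K$ is a forward neighbor of $H$ with respect to $d$.} Rule~\textup{(F)} orients the group at this moment. Later, when $K$ is processed in direction $d(K)$, $H$ is either a forward neighbor or an inward neighbor of $K$. If $H$ is an inward neighbor, rule~\textup{(I)} applies only when no edge is yet oriented, so it does nothing. If $H$ is a forward neighbor, then $d(K)=-d$, and we must rule out that \textup{(F)} fires again and re-orients the group. This is the main obstacle. The literal statement of rule~\textup{(F)} orients unconditionally, so the argument must rely on the convention, implicit in the lemma, that \textup{(F)} orients only still-unoriented edges; equivalently, we would argue that any re-application leaves the orientation unchanged or is skipped. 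I would first try to prove the cleaner fact that this situation cannot arise. When $H$ is processed, \textup{(F)} inserts $K$ into the current $W$, so $K$ is processed in the same wave, giving $d(K)=d$. Then $H$ lies on the inward side of $K$, and only rule~\textup{(I)} is relevant, which has no effect. This settles the case without any extra convention.

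\emph{Case $K$ is an inward neighbor of $H$.} When $H$ is processed, $K$ is unvisited, so \textup{(I)} does not orient the group; $K$ is inserted into $\Gamma_{-d}$. When $K$ is later processed, $H$ is visited. If $d(K)=-d$, then $H$ is a forward neighbor of $K$, and \textup{(F)} orients the group for the first time. If $d(K)=d$, which can happen when $K$ is reached as a forward neighbor of another component in the same wave, then $H$ is an inward neighbor of $K$. Since $H$ is visited and the group is still unoriented, \textup{(I)} orients it. In either subcase the group is oriented exactly once, and no later step touches it, because both endpoints' components have now been processed. Combining the vertical and horizontal cases proves the lemma.
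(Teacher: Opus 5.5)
Your overall structure matches the paper's proof. Vertical edges are handled by rule~\textup{(V)} and \cref{prop:W-bound}. For each horizontal group you let $H$ be the component processed first and split on whether $K$ is a forward or inward neighbor of $H$. Your forward case is correct, and it is more explicit than the paper's closing ``at most once'' sentence: by noting that $K$ is inserted into the current $W$ and is therefore processed in the same wave, you rule out a second firing of \textup{(F)} from $K$. You can delete the digression about an implicit convention. Its ``equivalently'' is also wrong, since a second firing of \textup{(F)} from $K$ would reverse the edges rather than leave them unchanged. The normalization $x(K)=x(H)+1$ is never used and clashes with your WLOG on processing order, so drop it as well.

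The inward case, however, has forward and inward swapped. There $x(K)=x(H)-d$, so $x(H)=x(K)+d$.
\begin{itemize}
\item If $d(K)=d$ (that is, $K$ is reached later in the same wave), then $x(H)=x(K)+d(K)$. So $H$ is a \emph{forward} neighbor of $K$, and rule~\textup{(F)} orients the group from $K$ into $H$.
\item If $d(K)=-d$, then $x(H)=x(K)-d(K)$. So $H$ is an \emph{inward} neighbor of $K$, and rule~\textup{(I)} orients the group from $H$ into $K$.
\end{itemize}
Concretely, take $d=+1$, $x(H)=5$, $x(K)=4$: a rightward wave at $K$ looks forward to column~$5$, while a leftward one looks inward to column~$5$. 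Your two sentences assert the opposite and are false as written.

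The ``exactly once'' count survives the correction. In either subcase exactly one of \textup{(F)}, \textup{(I)} is relevant at $K$, it acts on a still-unoriented group, and nothing touches the group afterward. So this is a local fix, but you must make it, because the resulting edge directions are what \cref{lem:potential} and \cref{lem:P3} depend on.
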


\begin{proof}
We consider vertical and horizontal edges separately.

\smallskip\noindent
\emph{Vertical edges.}
A vertical edge lies entirely within a unique column component $H$. By rule~\textup{(V)},
all vertical edges of $H$ are oriented when $H$ is processed, and no other rule
orients vertical edges. Since each component is processed exactly once by
\cref{prop:W-bound}, every vertical edge is oriented exactly once.

\medskip
\noindent\emph{Horizontal edges.}
Let $e$ be a horizontal edge between adjacent column components $A$ and $B$, and
suppose $A$ is processed before $B$. Let $d=d(A)$ be the direction of the wave
that processes $A$. Since $B$ is still unvisited at that moment, $e$ is still
unoriented when $A$ is processed.

If $B$ is a forward neighbor of $A$ with respect to $d$, then rule~\textup{(F)}
orients all horizontal edges between $A$ and $B$, including $e$, from $A$ into $B$.

Otherwise, $B$ is an inward neighbor of $A$. Then rule~\textup{(I)} does not apply
while processing $A$, because $B$ is unvisited. Thus $e$ remains unoriented after
$A$ is processed, and $B$ is inserted into the opposite frontier $\Gamma_{-d}$.
When $B$ is later processed, the edge $e$ is still unoriented and $A$ is already visited. 

If $B$ is processed later in the same wave, then $A$ is a forward neighbor of $B$ with respect to $d$, and rule~\textup{(F)} orients $e$ from $B$ into $A$.
Otherwise, $B$ is processed in the subsequent wave, whose direction is $-d$; then $A$ is an inward neighbor of $B$, and
rule~\textup{(I)} orients $e$ from $A$ into $B$. 

Thus every horizontal edge is oriented at least once. It is oriented at most once
because only rules~\textup{(F)} and~\textup{(I)} can orient horizontal edges between
$A$ and $B$, each component is processed exactly once by \cref{prop:W-bound}, and
rule~\textup{(I)} applies only when those edges are still unoriented.
Therefore every edge of $G$ is oriented exactly once.
\end{proof}

\noindent
We next establish a monotone potential function that will imply acyclicity.

\begin{definition}[Potential function]
\label{def:potential}
For a vertex $v$, let $\delta(v)$ be the length of the unique vertical directed path
from $s(C_v)$ to $v$ in $\overrightarrow{G}$.  
The \emph{potential} of $v$ is $
\Phi(v)\;=\;\bigl(w(C_v),\; d(C_v)\cdot x(v),\; \delta(v)\bigr)$, 
ordered lexicographically.
\end{definition}

Since each column component is a vertical path, rule~\textup{(V)} orients its
vertical edges away from the seed, so the quantity $\delta(v)$ is well defined. 

\begin{lem}\label{lem:potential}
For every directed edge $u\to v$ of $\overrightarrow{G}$, we have $\Phi(u)<\Phi(v)$.
More precisely:
\begin{enumerate}[label=\textup{(\roman*)}, leftmargin=*, itemsep=0pt, topsep=3pt]
\item if $u\to v$ is vertical, then the first
two coordinates of $\Phi$ are equal, and the third strictly increases;
\item if $u\to v$ is horizontal, then the third coordinate plays no role, and the potential increase  is determined by the first two coordinates.
\end{enumerate}
\end{lem}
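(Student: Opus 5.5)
We argue edge by edge, splitting on whether $u\to v$ is vertical or horizontal, and in the horizontal case on which rule oriented it. \cref{lem:oriented-once} guarantees each edge has a single well-defined orientation, produced by exactly one application of rule~(V), (F) or~(I).

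\emph{Vertical edges.} If $u\to v$ is vertical, then $C_u=C_v=H$, so $w(C_u)=w(C_v)$ and $d(C_u)=d(C_v)$. Since $x(u)=x(v)=x(H)$, the first two coordinates of $\Phi$ agree. By rule~(V), the vertical edges of the path $H$ are oriented away from $s(H)$. Hence $v$ is farther from the seed than $u$ along the directed path, giving $\delta(v)=\delta(u)+1$. This proves~(i).

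\emph{Horizontal edges oriented by rule~(F).} Suppose the edge was oriented by (F) while processing $A=C_u$ in a wave of direction $d$, with $B=C_v$ and $x(B)=x(A)+d$. Two subcases arise.
\begin{itemize}
\item If $B$ was still unvisited, then $B$ was either inserted into the current $W$ or was already in it. It is therefore processed in the same wave, unless it had been processed before. It cannot have been processed before, since it is unvisited; and it cannot be processed in a later wave, because its presence in $W$ forces processing before the wave ends. So $w(B)=w(A)$ and $d(B)=d$, and the second coordinate rises from $d\,x(A)$ to $d\,x(A)+d^2=d\,x(A)+1$.
\item If $B$ was already visited, then $w(B)\le w(A)$. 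Here I must rule out this orientation. When $B$ was processed, $A$ was unvisited. If $d(B)=d$, then $A$ is an inward neighbor of $B$, and the edges stayed unoriented, as in the proof of \cref{lem:oriented-once}. If $d(B)=-d$, then $A$ is a forward neighbor of $B$, and (F) at $B$ already oriented the edges $B\to A$; this contradicts their being oriented again at $A$, by \cref{lem:oriented-once}.
\end{itemize}
The case $d(B)=d$ with $B$ visited is the delicate one. There $B$ inserted $A$ into $\Gamma_{-d}$, so $A$ is processed either in the same wave $w(B)$ or in wave $w(B)+1$ of direction $-d$. Since $d(A)=d$ by assumption, $A$ lies in the same wave as $B$. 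I expect this to be the main obstacle. I would handle it by noting that the lemma's claimed orientation must still hold, which means showing such an (F)-orientation from $A$ into an already visited $B$ of the same wave cannot occur. I would try to derive a contradiction from how $B$ got its seed and entered $W$, or else treat it as a genuine case with $w$ equal and the second coordinate increasing. Indeed, in that case $d\,x(B)=d\,x(A)+1>d\,x(A)$ as well, so the inequality holds regardless, and no contradiction is actually needed.

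\emph{Horizontal edges oriented by rule~(I).} Suppose the edge was oriented by (I) at $B=C_v$ from a visited inward neighbor $A=C_u$. Then $A$ was processed earlier, so $w(A)\le w(B)$. If $w(A)<w(B)$, the first coordinate strictly increases. If $w(A)=w(B)$, the direction $d$ is shared and $x(B)=x(A)+d$, so the second coordinate increases by $1$ exactly as above.

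Together these cases give~(ii): in the horizontal case either the wave index increases or it stays equal while $d\cdot x$ increases, and $\delta$ is never consulted. The only real care needed is bookkeeping that $w$ is monotone along the edge. This follows because the tail component is always visited no later than the head, which is exactly how (F) and (I) choose the orientation.
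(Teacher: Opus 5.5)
Your proof is correct and follows the paper's argument essentially step for step. The vertical case is the same, and so is rule~(F): the head $B$ is either unvisited (same wave) or already visited, where $d(B)=-d$ is impossible and $d(B)=d$ forces $w(A)=w(B)$; the one variation is in rule~(I), where you absorb a possible tie $w(A)=w(B)$ through the second coordinate instead of ruling it out as the paper does, which is equally valid.

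Two points need cleaning up. First, drop the hedging in the visited, same-direction case. No contradiction is available there, because the situation genuinely occurs: $B$ is processed first and puts $A$ into $\Gamma_{-d}$, and $A$ then enters $W$ as a forward neighbor of another component in the same wave. The paper treats it as a real case too. Second, your closing remark that the tail is always visited no later than the head fails in exactly that case; what holds, and what your case analysis actually establishes, is $w(A)\le w(B)$.
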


\begin{proof}
Suppose first that $u\to v$ is vertical. Then $C_u=C_v$, and rule~\textup{(V)} orients
all vertical edges of $C_u$ away from $s(C_u)$. Hence
$\delta(v)=\delta(u)+1$, while the first two coordinates of $\Phi$ are unchanged.
Therefore $\Phi(u)<\Phi(v)$, proving~(i).

Now suppose that $u\to v$ is horizontal. Then $C_u$ and $C_v$ are distinct adjacent
column components.

\smallskip\noindent
\emph{Case \textup{(I)}.}
If $u\to v$ is oriented by rule~\textup{(I)} while processing $C_v$, then $C_u$ is
already visited and lies on the inward side of $C_v$. 
It cannot have been processed in the same wave as $C_v$, since then 
$C_v$ would have been a forward neighbor of $C_u$ and rule~\textup{(F)} would already 
have oriented all edges between them. Thus $w(C_u)<w(C_v)$, 
so the first coordinate of $\Phi$ strictly increases. Hence
$\Phi(u)<\Phi(v)$.

\smallskip\noindent
\emph{Case \textup{(F)}.}
Suppose instead that $u\to v$ is oriented by rule~\textup{(F)} while processing $C_u$
in direction $d$, with $C_v$ a forward neighbor of $C_u$.

If $C_v$ is unvisited at that time, then $C_v$ is inserted into the current local
worklist $W$ and is processed in the same wave as $C_u$, so $w(C_u)=w(C_v)$.

Suppose now that $C_v$ is already visited when $C_u$ is processed. Let
$d'=d(C_v)$. If $C_v$ was processed in an earlier wave and $d'=-d$, then $C_u$
was a forward neighbor of $C_v$, so rule~\textup{(F)} would already have oriented
the edge from $C_v$ into $C_u$, a contradiction. If $C_v$ was processed in an
earlier wave and $d'=d$, then $C_u$ was an unvisited inward neighbor of $C_v$ and
would have been processed no later than the next wave, before any later wave in
direction $d$, again a contradiction. Hence $C_v$ must have been processed earlier
in the same wave as $C_u$, and therefore $w(C_u)=w(C_v)$.

Since both components are processed in this wave, $d(C_u) = d(C_v) = d$. Moreover, $C_v$ lies on the forward side of $C_u$, so
$x(v)=x(u)+d$, which implies $d\,x(v)=d\,x(u)+1$. Thus the second coordinate of $\Phi$
strictly increases, so again $\Phi(u)<\Phi(v)$.
This proves~(ii).
\end{proof}

\noindent
We now verify the three defining properties of a rooted pinch-free orientation.

\begin{lem}\label{lem:P1}
The orientation $\overrightarrow{G}$ satisfies property \textup{(P1)}:
it is pinch-free.
\end{lem}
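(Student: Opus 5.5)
To prove \cref{lem:P1}, we fix a vertex $v$ with component $H=C_v$ processed in wave $w(H)$ with direction $d=d(H)$, and check the two collinear pairs separately: the vertical pair (edges to the cells directly above and below $v$) and the horizontal pair (edges to the cells directly left and right of $v$). By \cref{lem:oriented-once}, each edge has a single well-defined orientation, so it suffices to show that in each pair at least one edge points out of $v$.

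\emph{Vertical pair.} Both vertical edges at $v$ lie in $H$ and are oriented by rule~\textup{(V)} away from $s(H)$. Since $H$ is a vertical path, if $v\neq s(H)$ then exactly one of the two vertical neighbors lies on the far side from the seed, and that edge leaves $v$. If $v=s(H)$, both vertical edges leave $v$. Either way, $v$ has at most one incoming vertical edge.

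\emph{Horizontal pair.} Let $K$ be the component containing the forward neighbor of $v$ (at $x(v)+d$), if it exists. When $H$ is processed, rule~\textup{(F)} orients every horizontal edge between $H$ and $K$ from $H$ into $K$, unless that edge was already oriented earlier. By \cref{lem:oriented-once}, an earlier orientation could only come from the processing of $K$ before $H$. The main step is to rule this out, i.e.\ to show that the edge from $v$ to its forward neighbor is oriented out of $v$. Suppose $K$ was processed before $H$. If $d(K)=-d$, then $H$ is a forward neighbor of $K$, and rule~\textup{(F)} at $K$ orients the edge into $v$. If $d(K)=d$, then $H$ is an inward neighbor of $K$ and the edge stays unoriented until $H$ is processed; at that point rule~\textup{(F)} orients it out of $v$.

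So the problematic case is $d(K)=-d$, which forces $w(K)<w(H)$. I would then argue directly that the edge on the inward side of $v$ (toward $x(v)-d$), if it exists, must also point out of $v$, or otherwise reach a contradiction with the wave structure. This is where I expect the main obstacle. I would handle it through seed choice and the potential of \cref{lem:potential}. If both horizontal edges at $v$ were incoming, then by \cref{lem:potential} both neighbors $u_1=(x(v)-1,y)$ and $u_2=(x(v)+1,y)$ would have potential below $\Phi(v)$. Since $d\,x(u_i)$ differ by $2$ and $w(C_{u_i})\le w(H)$, I would show that the neighbor on the forward side of $v$ can only point into $v$ if its wave is strictly earlier. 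Likewise, the inward neighbor points into $v$ either via rule~\textup{(I)} (an earlier wave) or via rule~\textup{(F)} (the same wave).

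The key claim to establish is that a component processed in wave $w(H)-1$ with direction $-d$ cannot be a forward neighbor of $H$ without having inserted $H$ into its own wave's $W$. Indeed, $H$ is its forward neighbor in direction $-d$, so rule~\textup{(F)} would place $H$ into that earlier wave, contradicting $w(H)$. The same argument applies to any earlier wave in which $H$ was still unvisited, so $K$ cannot precede $H$ with $d(K)=-d$. Hence the forward-side edge at $v$ always leaves $v$, and $v$ has at most one incoming horizontal edge. Combining both pairs gives \textup{(P1)}.
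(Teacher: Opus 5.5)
Your proof is correct and has the same structure as the paper's: rule~\textup{(V)} gives vertical in-degree at most one, and horizontal in-degree is at most one because the edge to the forward side of $v$ is always oriented out of $v$. Where the paper simply invokes \cref{lem:oriented-once} for that forward-side claim, you justify it directly: a forward-side component processed earlier in direction $-d$ would have pulled $C_v$ into its own wave via rule~\textup{(F)}. The exploratory potential-function paragraph in the middle is never used and should be cut.
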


\begin{proof}
Fix a vertex $v$, and let $C_v$ be the column component containing $v$.

\medskip
\noindent\emph{Vertical in-degree.}
When $C_v$ is processed, rule~\textup{(V)} orients all vertical edges of $C_v$ away from
$s(C_v)$. Since $C_v$ is vertically connected,
every vertex of $C_v$ except $s(C_v)$ has exactly one incoming vertical edge. Thus $v$ has
vertical in-degree at most~$1$.

\medskip
\noindent\emph{Horizontal in-degree.}
When $C_v$ is processed, rule~\textup{(F)} orients every horizontal edge from $C_v$ to
its forward side out of $C_v$. By~\cref{lem:oriented-once}, these edges are oriented only 
once. Hence $v$ has no incoming horizontal edge from the forward side of $C_v$, so any
incoming horizontal edge at $v$ must come from the inward side. Since $v$ has at most
one horizontal neighbor on that side, $v$ has horizontal in-degree at most~$1$.

Therefore $v$ cannot have two collinear incoming edges. Hence no vertex of
$\overrightarrow{G}$ is pinched.
\end{proof}

\medskip
\noindent
To prove acyclicity, we use the lexicographically ordered potential function introduced
in \cref{def:potential}. 

\begin{lem}\label{lem:P2}
The orientation $\overrightarrow{G}$ satisfies property \textup{(P2)}: it is acyclic.
\end{lem}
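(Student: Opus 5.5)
The plan is to derive acyclicity directly from the potential function of \cref{def:potential} together with \cref{lem:potential}. First I would note that, by \cref{lem:oriented-once}, $\overrightarrow{G}$ is a genuine orientation: every edge carries exactly one direction. Next I would check that $\Phi$ is well defined on every vertex. Each component is processed exactly once (\cref{prop:W-bound}), so $w(C_v)$ and $d(C_v)$ are well defined. Since $C_v$ is a vertical path whose edges are oriented away from $s(C_v)$ by rule~(V), the length $\delta(v)$ is also well defined. Finally, $\Phi$ takes values in $\mathbb{Z}^3$ under the lexicographic order, which is a strict total order.

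Now suppose, for contradiction, that $\overrightarrow{G}$ contains a directed cycle $v_0\to v_1\to\cdots\to v_k=v_0$ with $k\ge 1$. Applying \cref{lem:potential} to each arc gives
\[
\Phi(v_0)<\Phi(v_1)<\cdots<\Phi(v_k)=\Phi(v_0).
\]
By transitivity this yields $\Phi(v_0)<\Phi(v_0)$, which contradicts the irreflexivity of the lexicographic order. Hence $\overrightarrow{G}$ has no directed cycle, and property~(P2) holds.

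Essentially all the work sits in \cref{lem:potential}, which we may assume. The only point needing care here is that lexicographic order on integer triples is a strict total order, so strict increase along every arc rules out returning to the starting vertex. As a byproduct, sorting the vertices by $\Phi$ gives an explicit topological order of $\overrightarrow{G}$.
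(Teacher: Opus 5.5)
Your proof is correct and follows the paper's argument: \cref{lem:potential} gives a strict increase of $\Phi$ along every arc, so a directed cycle would force $\Phi(v_0)<\Phi(v_0)$. Your checks that $\Phi$ is well defined and that the lexicographic order on $\mathbb{Z}^3$ is a strict total order spell out what the paper leaves implicit; they add detail but do not change the route.
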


\begin{proof}
By~\cref{lem:potential}, the potential $\Phi$ strictly increases along every
directed edge of $\overrightarrow{G}$. A directed cycle would therefore yield a strict
increase all the way around the cycle, which is impossible. Hence $\overrightarrow{G}$
contains no directed cycle.
\end{proof}

\begin{lem}\label{lem:P3}
The orientation $\overrightarrow{G}$ satisfies property \textup{(P3)}: it is rooted.
\end{lem}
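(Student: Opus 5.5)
We take the root to be $r=s(H_{\mathrm{root}})$ and show by induction on processing order that every vertex of a processed component is reachable from $r$. By \cref{prop:W-bound}, every component is processed exactly once, and there is a well-defined total order in which components are processed. The claim we prove inductively is the following: when a component $H$ has been processed, every vertex of $H$ is reachable from $r$ in the final orientation $\overrightarrow{G}$. Since edges are oriented only once (\cref{lem:oriented-once}), directed paths built from edges oriented earlier remain valid in $\overrightarrow{G}$.

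Within a component, it suffices to show that the seed $s(H)$ is reachable from $r$. Rule~\textup{(V)} orients all vertical edges of the vertical path $H$ away from $s(H)$, so every vertex of $H$ is then reachable from $s(H)$. For $H=H_{\mathrm{root}}$, the seed is $r$ itself.

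For $H\neq H_{\mathrm{root}}$, we argue about the seed. When $H$ is removed from $W$ unvisited, it was inserted either by rule~\textup{(F)} from some processed $A$ with $x(A)=x(H)-d$ in the same wave, or into $\Gamma_{d}$ during the previous wave (direction $-d$) as an inward neighbor of some processed $A$. In that previous wave, inward means $x(A)-(-d)=x(H)$, so again $x(A)=x(H)-d$. In either case there is a visited component in column $x(H)-d$ adjacent to $H$, so the seed chosen in step~1 exists. Thus $s(H)$ has a neighbor $u$ in a visited component $A$ with $x(A)=x(H)-d$, and $A$ lies on the inward side of $H$. By the induction hypothesis, $u$ is reachable from $r$.

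It remains to show that the edge $\{u,s(H)\}$ is oriented from $u$ into $s(H)$. This is the main step to check. Since $A$ was processed before $H$ and $H$ was unvisited at that time, \cref{lem:oriented-once} shows the orientation is determined as follows. If $d(A)=d$, then $H$ is a forward neighbor of $A$, and rule~\textup{(F)} orients all edges from $A$ into $H$. If $d(A)=-d$, then $H$ is an inward neighbor of $A$ and those edges stay unoriented; when $H$ is processed, rule~\textup{(I)} orients them from $A$ into $H$, since $A$ is a visited inward neighbor of $H$. In both cases $u\to s(H)$, so $s(H)$, and hence all of $H$, is reachable from $r$.

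The possible obstacle is the case where $A$ and $H$ are in the same wave but $H$ is the forward neighbor. This is handled by (F), and no (F) from $H$ toward $A$ can occur, because $A$ is on the inward side of $H$. Combining these cases with \cref{prop:W-bound}, which ensures every component is processed, every vertex is reachable from $r$.
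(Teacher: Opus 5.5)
Your proof is correct and follows the paper's argument. You induct on processing order, get $u\to s(H)$ from rule (F) or rule (I) according to whether $H$ is a forward or inward neighbor of the visited component $A$, and then reach the rest of $H$ via rule (V). Your extra check that a visited component in column $x(H)-d$ always exists is something the paper takes for granted from step 1 of the algorithm, and your argument for it is sound.
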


\begin{proof}
Let $C_r$ be the root component chosen in Algorithm~\ref{alg:wavefront}, and let
$r=s(C_r)$ be its seed. We prove by induction on the processing order of the
column components that every vertex of each processed component is reachable from $r$.

For the base case, the first processed component is $C_r$. Rule~\textup{(V)}
orients all vertical edges of $C_r$ away from $r$, so every vertex of $C_r$ is
reachable from $r$.

Now let $H\neq C_r$ be a newly processed component, and assume the claim holds for
all previously processed components. The seed $s(H)$ is chosen
to be a vertex of $H$ adjacent to a vertex $u$ in some already visited inward
neighbor component $A$.

We claim that the edge between $u$ and $s(H)$ is oriented as $u\to s(H)$.
When $A$ is processed, the component $H$ is still unvisited. Hence rule~\textup{(I)}
cannot apply to the pair ${A,H}$. Therefore the edge between $u$ and $s(H)$ can be
oriented at that time only if $H$ lies on the forward side of $A$, in which case
rule~\textup{(F)} orients it as $u\to s(H)$.
Otherwise, the edge is still unoriented when $H$ is processed.
At that moment, $A$ is a visited inward neighbor of $H$, so rule~\textup{(I)}
orients the horizontal edges between $A$ and $H$ from $A$ into $H$, again giving
$u\to s(H)$.

By the induction hypothesis, $u$ is reachable from $r$, and therefore so is
$s(H)$. Rule~\textup{(V)} then orients all vertical edges of $H$ away from
$s(H)$, so every vertex of $H$ is reachable from $s(H)$, and hence from $r$.
Thus every vertex of every processed component is reachable from $r$. Since every
component is eventually processed by~\cref{prop:W-bound}, every vertex of $G$ is
reachable from $r$. Therefore $\overrightarrow{G}$ is rooted.
\end{proof}

\begin{theorem}[Wavefront correctness]
\label{thm:wavefront-correctness}
$\overrightarrow{G}$ is a rooted pinch-free orientation of $G$.
\end{theorem}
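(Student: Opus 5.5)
The theorem follows by assembling the lemmas already proved in this section, so the plan is mostly bookkeeping. First, I will note that $\overrightarrow{G}$ is a genuine orientation of $G$. By \cref{lem:oriented-once}, every edge of $G$ receives exactly one direction, so no edge is left unoriented or oriented twice in conflicting ways. This step rests on \cref{prop:W-bound}, which guarantees that the algorithm terminates with every column component processed exactly once. For termination I will also point out that each wave removes at least one entry from a finite collection of worklist insertions: by \cref{prop:W-bound}, each component is inserted $O(1)$ times. So the main loop ends with both frontiers empty.

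Next, I will verify the three properties of \cref{def:roo} one at a time. Property (P1) is \cref{lem:P1}, which bounds both the vertical and the horizontal in-degree of every vertex by one. Property (P2) is \cref{lem:P2}, which follows from the lexicographic potential $\Phi$ of \cref{def:potential} increasing strictly along every arc (\cref{lem:potential}). Property (P3) is \cref{lem:P3}, with root $r=s(H_{\mathrm{root}})$.

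Combining these, $\overrightarrow{G}$ satisfies (P1)--(P3), so it is a rooted pinch-free orientation of $G$. The only step needing care is confirming that the lemmas rely on compatible hypotheses. In particular, \cref{lem:potential} needs $w(\cdot)$ and $d(\cdot)$ to be well defined, which requires each component to be processed exactly once. It also needs $\delta(v)$ to be well defined, which holds because each component is a vertical path and all of its edges are oriented away from its seed. Both facts are already supplied by \cref{prop:W-bound} and rule (V), so I expect no real obstacle beyond citing them in the right order.
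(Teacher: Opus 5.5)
Your proposal is correct and follows the paper's proof, which simply invokes \cref{lem:P1,lem:P2,lem:P3} to obtain properties (P1)--(P3). Your extra remarks are sound bookkeeping that the paper leaves implicit: the orientation is well defined by \cref{lem:oriented-once}, and the algorithm terminates because \cref{prop:W-bound} bounds the number of worklist appearances.
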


\begin{proof}
By \cref{lem:P1,lem:P2,lem:P3}, the orientation satisfies properties
\textup{(P1)}--\textup{(P3)} of Definition~\ref{def:roo}.
Thus it is pinch-free, acyclic, and rooted.
\end{proof}

\subsection{Complexity of Wavefront Sweep}\label{sec:complexity}

Let $n$ be the number of cells of the input polyomino $P$.
We assume that the input is given as a list of
occupied grid coordinates sorted lexicographically by $(x,y)$. The same bounds hold if the input is given as a $2$D occupancy array whose bounding box has size $O(n)$. 

The algorithm has two stages: preprocessing, which computes the column
decomposition and the horizontal adjacencies between column components, and the
wavefront phase, which orients the edges.

\smallskip\noindent
\emph{Preprocessing.}
We identify maximal vertical runs and their
horizontal adjacencies by scanning cells column-by-column and inspecting 
horizontal neighbor pairs across consecutive columns. 
Since the dual graph $G$ has maximum degree~4, it has $O(n)$ edges.
Since each cell and each grid edge is examined only $O(1)$ times, preprocessing takes
$O(n)$ time and $O(n)$ auxiliary space.

\smallskip\noindent
\emph{Wavefront phase.}
Each column component is processed exactly once, and each edge of $G$ is 
oriented exactly once by \cref{prop:W-bound,lem:oriented-once}. 
All operations use only constant-time work per component or per edge,
assuming standard $O(1)$ worklist operations for $W$ and the frontiers
$\Gamma_{\pm1}$. Hence the wavefront phase also runs in $O(n)$ time using
$O(n)$ auxiliary space. This establishes~\cref{thm:main}. 
 
 \section{Conclusion and Future Work}
\label{sec:conclusion}

We gave a linear-time algorithm that computes a rooted pinch-free orientation of
the dual graph of any connected polyomino. Thus, in the unrestricted-access model,
where robots can access any placement site, for example by flying, this
orientation provides a complete construction framework. 

\begin{figure}[!htp]
    \centering
    \includegraphics[page=3,width=0.7\textwidth]{graph.pdf}
    \caption{Outside-access limitations of rooted pinch-free orientations. (a) Closing the outer cycle too early blocks access to the interior. (b) The polyomino is buildable, but no topological order of the shown wavefront orientation is buildable, since $b$ must be placed last. (c) A non-buildable polyomino whose dual graph is a path; any rooted pinch-free orientation would require a root vertex with directed paths to both $a$ and $b$; some root choices on the perimeter of the left enclosed region can make $a$ reachable, and similarly for $b$ on the right; however, no root choice can make both $a$ and $b$ reachable.}
        \label{fig:seal}
\end{figure}

The stronger setting in which each cell to be placed must be reachable from the 
unbounded free region of the current partial construction is not captured by the orientation alone. 
Figure~\ref{fig:seal}
illustrates three distinct situations. In~\cref{fig:seal}(a), the wavefront
orientation admits a topological order that is buildable, but not every
topological order is: some topological orders complete the outer cycle before the
interior is built and thereby block access to the remaining cells. In
\cref{fig:seal}(b), the polyomino itself is buildable, but no topological order
of that particular wavefront orientation is buildable. Thus the orientation
correctly captures attachment dependencies and pinch-freeness, but it does not in
general encode outside-access constraints.~\cref{fig:seal}(c) shows
that some polyominoes are not buildable at all under the outside-access model.

These examples show that reachability from the unbounded free space is a global constraint not captured by rooted pinch-free orientations. Thus, wavefront orientation is only one part of construction planning. Characterizing which polyominoes admit such outside-access build orders under the local attachment rules considered here, and computing them efficiently, remain interesting directions for future work.   
\bibliographystyle{abbrv}
\bibliography{assembly}

\end{document}